\newtheorem{thm}{Theorem}
\def\BibTeX{{\rm B\kern-.05em{\sc i\kern-.025em b}\kern-.08em
    T\kern-.1667em\lower.7ex\hbox{E}\kern-.125emX}}
\begin{document}
\captionsetup[figure]{labelfont={rm},labelformat={default},labelsep=period,name={Fig.}}

\title{Secrecy Rate Maximization for Intelligent Reflecting Surface Aided SWIPT Systems}
\author
{\IEEEauthorblockN{Wei Sun\textsuperscript{1,2,3},~Qingyang Song*\textsuperscript{2},~Lei Guo\textsuperscript{2},~Jun Zhao\textsuperscript{3}}\\ [-10pt]
\IEEEauthorblockA{\textsuperscript{1}School of Computer Science and Engineering, Northeastern University, Shenyang, China\\
\textsuperscript{2}School of Communication and Information Engineering, Chongqing University of Posts\\ and Telecommunications,
Chongqing, China\\
\textsuperscript{3}School of Computer Science and Engineering,
Nanyang Technological University, Singapore\\
Emails:~weisun@stumail.neu.edu.cn, \{songqy,guolei\}@cqupt.edu.cn, JunZhao@ntu.edu.sg.\\[-13pt] }
}

\renewcommand{\headrulewidth}{0pt}

\maketitle
\thispagestyle{fancy}
\pagestyle{fancy}
\lhead{This paper appears in the Proceedings of 2020 IEEE/CIC International Conference on Communications in China (\textbf{ICCC}) 2020. Please feel free to contact us for questions or remarks.}
\cfoot{\thepage}
\renewcommand{\headrulewidth}{0.4pt}
\renewcommand{\footrulewidth}{0pt}

\begin{abstract}
Simultaneous wireless information and power transfer (SWIPT) and intelligent reflecting surface (IRS) are two promising techniques for providing enhanced wireless communication capability and sustainable energy supply to energy-constrained wireless devices. Moreover, the combination of the IRS and the SWIPT can create the ``one plus one greater than two" effect. However, due to the broadcast nature of wireless media, the IRS-aided SWIPT systems are vulnerable to eavesdropping. In this paper, we study the security issue of the IRS-aided SWIPT systems. The objective is to maximize the secrecy rate by jointly designing the transmit beamforming and artificial noise (AN) covariance matrix at a base station (BS) and reflective beamforming at an IRS, under transmit power constraint at the BS and energy harvesting (EH) constraints at multiple energy receivers. To tackle the formulated non-convex problem, we first employ an alternating optimization (AO) algorithm to decouple the coupling variables. Then, reflective beamforming, transmit beamforming and AN covariance matrix can be optimized by using a penalty-based algorithm and semidefinite relaxation (SDR) method, respectively. Simulation results demonstrate the effectiveness of the proposed scheme over baseline schemes.
\end{abstract}

\begin{IEEEkeywords}
Secrecy rate maximization, IRS, SWIPT, SDR, penalty-based algorithm.
\end{IEEEkeywords}


\section{Introduction}
With the commercial deployments of fifth-generation (5G), the exploration of next-generation (i.e., 6G) communication technologies has been begun in both academia and industry \cite{8869705,8808168,feng2019cooperative, 9086448,du2020mec}. Compared with previous generations, the goal of 6G is not only to pursue the improvement of network capacity and transmission rate but also to achieve intelligent interconnection of everything. With the large-scale deployment and connection of battery-constrained devices in 6G, it is urgent to continuously supply power to devices to prolong the lifetime of networks. In recent years, the emergence of simultaneous wireless information and power transfer (SWIPT) technology is expected to solve this problem \cite{qi2020robust,alageli2018swipt,8636993}. It can reduce the device's dependence on batteries, and provide enough energy to support higher-performance communication.\par

Nevertheless, the randomness of wireless channels and severe channel attenuation will lead to weak energy and information received at receivers in SWIPT systems. Recently, intelligent reflecting surface (IRS) as an intelligent technology is proposed to overcome the harmful effects of wireless environments \cite{wu2019towards, wu2019beamforming, wu2019joint, pan2019intelligent, tang2019joint}. The IRS consists of many low-cost passive reflection units, and each of them can be controlled by software to change the phase and amplitude of incident signals. By introducing IRS into the SWIPT systems and properly adjusting the reflective beamforming at the IRS, the reflected signal can be superimposed with the signals from other paths to enhance received signal power. It can also eliminate interference signals and effectively improve energy transmission efficiency. Some works have introduced IRS into the SWIPT systems to improve system performance. In \cite{wu2019joint}, the authors studied the transmit power minimization problem by jointly designing active and passive beamforming in the IRS-aided SWIPT systems. Pan \textit{et al}.~extended the case to the IRS-aided multi-input multi-output (MIMO) SWIPT system to maximize the weighted sum-rate \cite{pan2019intelligent}. Tang \textit{et al}.~jointly optimized information and energy transmit beamforming at the base station (BS) and passive beamforming at the IRS to maximize the minimum received power at all energy harvesting (EH) receivers \cite{tang2019joint}. \par

However, due to the broadcast nature of wireless media, the IRS-aided SWIPT systems are vulnerable to eavesdropping. Therefore, secure communication is a particularly important issue. Shen \textit{et al}.~investigated transmission optimization problem for IRS-assisted multi-antenna systems from a physical layer security (PLS) perspective \cite{8743496}. In \cite{yu2019robust}, a robust transmit beamforming algorithm has been designed to achieve secure communication in an IRS-aided system. In \cite{feng2019physical}, the authors proposed two algorithms to maximize the secrecy rate in an IRS-aided multiuser multi-input single-output (MISO) wireless system. In \cite {8421218}, Tang \textit{et al}.~studied secure non-orthogonal multiple access (NOMA) SWIPT systems. The sum secrecy rate is maximized under constraints on the minimum data rate requirement and the minimum harvested energy requirement. It can be seen that the PLS problem has been widely studied in the IRS-aided or SWIPT-aided wireless communication systems. However, it has not been fully explored in the IRS-aided SWIPT systems.

In this paper, we investigate PLS provisioning for IRS-aided SWIPT systems, where a multiple-antenna BS sends a message to an information receiver (IR) and provides energy to energy receivers (ERs) simultaneously in the presence of eavesdroppers (Eves). Meanwhile, ERs are also regarded as potential Eves. To further improve PLS, artificial noise (AN) is considered at the BS. Assuming that the channel state information (CSI) of the Eves is available, the secrecy rate is maximized by jointly optimizing the transmit beamforming and AN covariance matrix at the BS and the reflective beamforming at the IRS, under BS transmit power constraint and EH constraints. To this end, the secure beamforming design is formulated as a non-convex optimization problem. The non-convexities of the EH constraints and the unit modulus constraints at the IRS are main challenges to solve this problem. First, we propose an effective alternating optimization (AO) algorithm to address this non-convex problem. Next, the design of transmit beamforming and AN covariance matrix is resolved by using the semidefinite relaxation (SDR) method. The reflective beamforming is solved by using a penalty-based algorithm. Simulations show that our proposed scheme is better than two baselines in terms of secrecy rate.

The rest of this paper is organized as follows. The system model of the proposed secure IRS-aided SWIPT system and problem formulation are presented in Section \ref{II}. In Section \ref{III}, an effective iterative algorithm for maximizing the secrecy rate is proposed. Simulation results and discussions are presented in Section \ref{IV}. Finally, this paper is concluded in Section \ref{V}.\par

\textit{Notations:} Vectors and matrices are denoted by boldface lower-case and capital letters, respectively. $\bm A ^{H}$ and $\|{\bm A}\|$ denote Hermitian operator and Euclidean norm of ${\bm A}$, respectively. The symbols $\mathrm{Tr}(\cdot)$ and $\mathbb{E}[\cdot]$ denote the trace and statistical expectation, respectively. $\mathrm{diag}\{\bm A\}$ is a diagonal matrix with the entries of $\bm A$ on its main diagonal. ${\bm A}\succeq \boldsymbol{0}$ indicates that $\bm A$ is a positive semi-definite (PSD) matrix. $\mathbb{C}^{x \times y}$ denotes the space of $x \times y$ complex-valued matrix. The circularly symmetric complex Gaussian (CSCG) distribution is denoted by $\mathcal{CN}(0,\sigma^2)$ with mean $0$ and variance $\sigma^2$. \par

\section{System Model And Problem Formulation}\label{II}
\subsection{System Model}
As illustrated in Fig. $\ref{fig:1}$, we consider a secure IRS-aided SWIPT system that consists of one $N_{t}$-antenna BS, one IRS, one IR, $K$ Eves, indexed by $\mathcal{K}\triangleq \{1,\cdots,K\}$, and $M$ ERs, indexed by $\mathcal{M}\triangleq \{1,\cdots,M\}$. All receivers and Eves are equipped with a single antenna. The IRS consists of $N_{r}$ passive reflecting elements, indexed by $\mathcal{N}_r \triangleq \{1,\cdots,N_r\}$. The IRS assists secure SWIPT transmission from the BS to the IR and the ERs. The ERs are usually deployed close to the BS to achieve a higher EH efficiency. However, it makes them easier to eavesdrop on the IR signal. Therefore, they are also regarded as potential Eves. Due to the broadcast nature of radio frequency (RF) channels, the signal transmitted from the BS to the IR is overheard by the ERs and the Eves. \par
As shown in Fig. $\ref{fig:1}$, the baseband equivalent channel responses from the BS to the IRS, from the BS to the IR, from the BS to the $i$th ER, from the BS to the $k$th Eve, from the IRS to the IR, from the IRS to the $k$th Eve, and from the IRS to the $i$th ER are denoted by $\boldsymbol{Q}\in\mathbb{C}^{N_{r}\times N_{t}}$, $\boldsymbol{h}_{d}\in\mathbb{C}^{N_{t}\times 1}$, $\boldsymbol{g}_{d,i}\in\mathbb{C}^{N_{t}\times 1}$, $\boldsymbol{h}_{deve,k}\in\mathbb{C}^{N_{t}\times 1}$, $\boldsymbol{h}_{r}\in\mathbb{C}^{N_{r}\times 1}$, $\boldsymbol{h}_{reve,k}\in\mathbb{C}^{N_{r}\times 1}$ and $\boldsymbol{g}_{r,i}\in\mathbb{C}^{N_{r}\times 1}$, respectively. All channels are assumed to be quasi-static flat fading.\par
\begin{figure}[!t]
\centering
\includegraphics[scale=0.4]{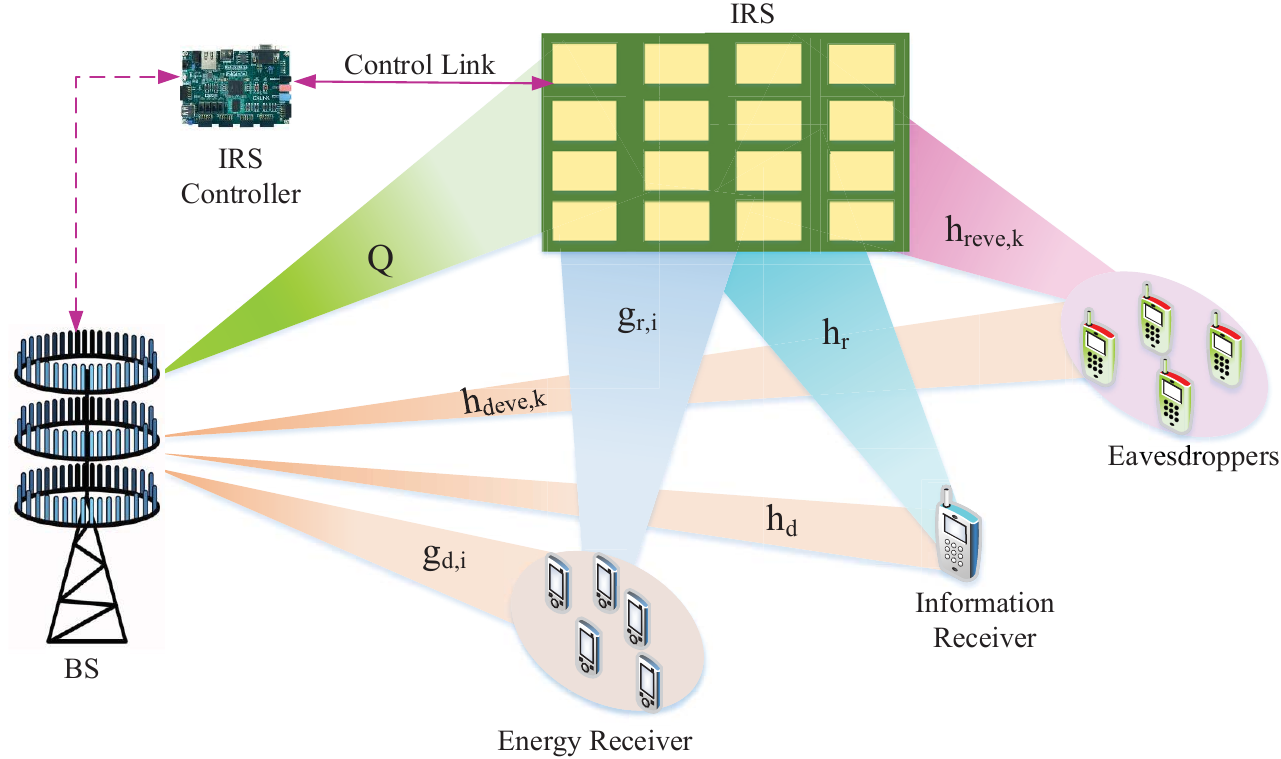}
\caption{System model for secure IRS-aided SWIPT systems.}
\label{fig:1}
\vspace{-14pt}
\end{figure}
To protect the data transmission against eavesdropping and improve the amount of harvested energy at the ERs, the BS transmits the information signal together with AN signal to the IR. Thus, the transmit signal $\boldsymbol{x}$ from the BS is given by $\boldsymbol{x}=\boldsymbol{w}s+\boldsymbol{v}$, where $s\in\mathbb{C}$ is the transmit symbol with $\mathbb{E}[ss^{H}]=1$, $\boldsymbol{w}\in\mathbb{C}^{N_{t}\times 1}$ indicates the transmit beamforming vector which sends the confidential information to the desired receiver and $\boldsymbol{v} \in \mathbb{C}^{N_{t}\times 1}$ is a pseudo-random AN vector generated by the BS. It is assumed that $\boldsymbol{v}$ is modeled as a random vector with CSCG distribution, i.e., $\boldsymbol{v}\sim \mathcal{CN}(0,\boldsymbol{V})$ with $\boldsymbol{V} \succeq 0$.\par
The received signal at the IR is written as
\vspace{-3pt}
\begin{align}
y_{ir}=\boldsymbol{h}^H\boldsymbol{x}+n_{ir},\label{2}
\end{align}
where $\boldsymbol{h}^H=\boldsymbol{h}_{d}^{H}+\boldsymbol{h}_{r}^{H}\boldsymbol{\Theta}\boldsymbol{Q}$, $\boldsymbol{\Theta}=\mathrm{diag}(\theta_1,\cdots, \theta_{N_r})\in\mathbb{C}^{N_{r}\times N_{r}}$ is the reflective matrix at the IRS, $\theta_n=e^{j\phi_{n}}$, $\phi_{n} \in [0, 2\pi)$ is the phase shift, and $n_{ir} \thicksim \mathcal{CN}(0, \sigma^{2}_{ir})$ is the complex additive white Gaussian noise (AWGN) at the IR.

The achievable rate of the IR is given by
\vspace{-3pt}
\begin{align}
R_{ir}=\log_{2}\left(1+|\boldsymbol{h}^{H}\boldsymbol{w}|^{2}/\sigma_{ir}^{2}\right),\label{6}
\end{align}
It is noted that the AN signal is a known deterministic sequence at the IR. Then, interference cancellation techniques can be used to cancel the AN signal before decoding the desired signal, thus it has no impact on the achievable rate \cite{su2018physical}.\par

Similarly, the received signals at the $i$th ER and the $k$th Eve are given by
\vspace{-3pt}
\begin{align}
y_{e,i}&=\boldsymbol{g}^{H}_{i}\boldsymbol{x}+n_{e,i},\\
y_{eve,k}&=\boldsymbol{h}^H_{eve,k}\boldsymbol{x}+n_{eve,k}\label{3},
\end{align}
where $\boldsymbol{g}^{H}_{i}=\boldsymbol{g}^{H}_{d,i}+\boldsymbol{g}^{H}_{r,i}\boldsymbol{\Theta}\boldsymbol{Q}$, $\boldsymbol{h}^H_{eve,k}=\boldsymbol{h}^{H}_{deve,k}+\boldsymbol{h}^{H}_{reve,k}\boldsymbol{\Theta}\boldsymbol{Q}$, $n_{eve,k}$ and $n_{e,i}$ are the complex AWGN at the $k$th Eve and the $i$th ER respectively, each of which is with zero mean and variances $\sigma^{2}_{eve,k}$ and $\sigma^2_{e,i}$.

The eavesdropping rates at the $i$th ER and the $k$th Eve can be expressed as
\vspace{-3pt}
\begin{align}
R_{e,i}&=\log_{2}\left(1+\frac{|\boldsymbol{g}^{H}_{i}\boldsymbol{w}|^2}{\mathrm{Tr}(\boldsymbol{g}^{H}_{i}\boldsymbol{g}_{i}\boldsymbol{V})+\sigma_{e,i}^{2}}\right),\label{6-2}\\
R_{eve,k}&=\log_{2}\left(1+\frac{|{\boldsymbol{h}}^{H}_{eve,k}\boldsymbol{w}|^2}{\mathrm{Tr}({\boldsymbol{h}}^{H}_{eve,k}{\boldsymbol{h}}_{eve,k}\boldsymbol{V})+\sigma_{eve,k}^{2}}\right).\label{7}
\end{align}

Thus, the achievable secrecy rate can be written as \cite{yu2019robust}
\begin{equation}
R_{sec}=[R_{ir}-\max\{\max_{k \in \mathcal{K}} \{R_{eve,k}\},\max_{i \in \mathcal{M}} \{R_{e,i}\}\}]^{+},\label{9}
\end{equation}
where $[a]^+=\max(0,a)$.

The received power at the $i$th ER can be given by \cite{tang2019joint}
\begin{eqnarray}
P_{EH,i}=|\boldsymbol{g}^{H}_{i}\boldsymbol{w}|^2+\mathrm{Tr}(\boldsymbol{g}^{H}_{i}\boldsymbol{g}_{i}\boldsymbol{V}).
\end{eqnarray}
In this paper, we employ a practial non-linear energy harvesting model proposed in \cite{boshkovska2017robust,boshkovska2015practical}. Then, the total harvested power at the $i$th ER is modeled as
\begin{eqnarray}\label{4}
\Phi_{EH,i}=\frac{\Psi_{EH,i}-M_i \Omega_i}{1-\Omega_i},~\Omega_i=\frac{1}{1+\mathrm{exp}(a_i b_i)},\\
\Psi_{EH,i}=\frac{M_i}{1+\mathrm{exp}(-a_i(P_{EH,i}-b_i))},
\end{eqnarray}
where $M_i$ is the maximum harvested power at the $i$th ER when the energy harvesting circuit is saturated, $a_i$ and $b_i$ are constants which capture the joint effects of resistance, capacitance and circuit sensitivity \cite{boshkovska2015practical}. Since the noise power is much smaller than the RF signal power, it can be ignored \cite{boshkovska2017robust,boshkovska2015practical}.

\subsection{Problem Formulation}
In this paper, the joint optimization of transmit beamforming, AN covariance matrix at the BS and reflective beamforming at the IRS is studied to maximize the secrecy rate in the secure IRS-aided SWIPT system, subject to transmit power constraint at the BS, EH constraints at the ERs as well as the constraint on the IRS reflective beamforming. The considered optimization problem is mathematically formulated as
\begin{subequations}
\begin{align}
\mathcal{P}_1:
\max_{\boldsymbol{w},\boldsymbol{V},\boldsymbol{\Theta}}~&R_{sec} &\label{10a}\\
\mbox{s.t.}~
&\|\boldsymbol{w}\|^2+\mathrm{Tr}(\boldsymbol{V}) \leq P_s,&\label{10b}\\
&\Phi_{EH,i}\geq \mu_{i},~\forall i \in \mathcal{M},&\label{10c}\\
&|\theta_{n}|=1,~\forall n \in \mathcal{N}_r,&\label{10d}\\
&\boldsymbol{V} \succeq \boldsymbol{0},&\label{10e}
\end{align}
\end{subequations}
where $P_s$ is the maximum transmit power at the BS, and $\mu_{i}$ denotes the minimum harvested power requirement for the $i$th ER.

It is obvious that $\mathcal{P}_1$ is non-convex nonlinear programming due to the non-convex objective function and constraints. So $\mathcal{P}_1$ is difficult to solve directly. In the following section, after some transformations, two algorithms will be proposed to deal with $\mathcal{P}_1$ efficiently in an iterative manner.
\section{Algorithm Design for Secure IRS-aided SWIPT Systems}\label{III}
In this section, we first reformulate the original problem into a more tractable form. Next, the SDR method and penalty-based algorithm are proposed to solve $\mathcal{P}_1$ in an alternative manner.

\subsection{Problem Transformation}
Firstly, we recast (\ref{10c}) as follows.
\begin{align}
P_{EH,i} \geq \beta_i:=b_i-\frac{1}{a_i}\mathrm{ln}(\frac{M_i}{\mu_i(1-\Omega_i)+M_i\Omega_i}-1),
\end{align}
where $\beta_i$ denotes the required received power under the non-linear EH model.\par
To solve problem $\mathcal{P}_1$, we introduce an auxiliary variable $\tau$ as the maximum tolerable channel capacity of each eavesdropper (i.e., ERs and Eves). Then, we can get
\begin{align}\label{transform-eu-eve}
R_{e,i} \leq \tau,~\forall i \in \mathcal{M}, ~R_{eve,k}\leq \tau,~\forall k \in \mathcal{K}.
\end{align}

Then, we recast $\mathcal{P}_1$ in an equivalent form as follows.
\begin{subequations}
\begin{align}
\mathcal{P}_2:
\max_{\boldsymbol{w},\boldsymbol{V},\boldsymbol{\Theta},\tau}&~ [R_{ir}-\tau]^{+} \label{14a}\\
\mbox{s.t.}~~
&\mathrm{Tr}\left(\boldsymbol{w}\boldsymbol{w}^{H}\right)+\mathrm{Tr}\left(\boldsymbol{V}\right)\leq P_{s},&\label{14d}\\
&P_{EH,i} \geq \beta_{i},&\label{14e}\\
&(\text{\ref{10d}}),~(\text{\ref{10e}}),~(\text{\ref{transform-eu-eve}}).
\end{align}
\end{subequations}

The problem $\mathcal{P}_2$ is non-convex due to constraints (\ref{transform-eu-eve}) and (\ref{10d}). Inspired by \cite{yu2019robust}, to deal with the non-convexity of constraint (\ref{transform-eu-eve}), for given $\tau$, we jointly optimize the transmit beamforming and AN covariance matrix at the BS and reflective beamforming at the IRS to maximize the secrecy rate. For a fixed $\tau$, set $\gamma=2^{\tau}-1$. Then, the optimization problem $\mathcal{P}_2$ can be transformed into the following equivalent form.
\begin{subequations}
\begin{align}
\mathcal{P}_3:
\min_{\boldsymbol{w},\boldsymbol{V},\boldsymbol{\Theta}}&~ -R_{ir} \label{P3_a}\\
\mbox{s.t.}~~
&\mathrm{Tr}\left(\boldsymbol{w}\boldsymbol{w}^{H}\right)+\mathrm{Tr}\left(\boldsymbol{V}\right)\leq P_{s},&\label{P3_b}\\
&P_{EH,i} \geq \beta_{i},&\label{P3_c}\\
&\frac{|\boldsymbol{g}^{H}_{i}\boldsymbol{w}|^2}{\mathrm{Tr}(\boldsymbol{g}^{H}_{i}\boldsymbol{g}_{i}\boldsymbol{V})+\sigma_{e,i}^{2}} \leq \gamma,&\label{P3_d}\\
&\frac{|{\boldsymbol{h}}^H_{eve,k}\boldsymbol{w}|^2}{\mathrm{Tr}({\boldsymbol{h}}^H_{eve,k}{\boldsymbol{h}}_{eve,k}\boldsymbol{V})+\sigma_{eve,k}^{2}} \leq \gamma,&\label{P3_e}\\
&(\text{\ref{10d}}),~(\text{\ref{10e}}).
\end{align}
\end{subequations}

It is shown that problem $\mathcal{P}_3$ is still non-convex due to tightly coupled transmit beamforming, AN covariance matrix and reflective beamforming. We employ an AO algorithm to decouple the optimization variables. Specifically, $\{\boldsymbol{w}, \boldsymbol{V}\}$ and $\boldsymbol{\Theta}$ are alternately solved while fixing the other variables. Then, the original problem can be divided into two subproblems. Two effective algorithms are proposed to solve $\mathcal{P}_3$ in an iterative manner.

\subsection{Transmit Beamforming and AN Covariance Matrix Design at the BS}\label{subB}
For a given phase shift matrix $\boldsymbol{\Theta}$, we optimize the beamforming vector $\boldsymbol{w}$ and AN covariance matrix $\boldsymbol{V}$. In this way, the optimization problem $\mathcal{P}_3$ can be converted into $\mathcal{P}_4$ under the following definitions: $\boldsymbol{G}_i=\boldsymbol{g}_i\boldsymbol{g}_i^H$, $\boldsymbol{H}=\boldsymbol{hh}^H$, $\boldsymbol{W}=\boldsymbol{w}\boldsymbol{w}^H$, ${\boldsymbol{H}}_{eve,k}={\boldsymbol{h}}_{eve,k}{\boldsymbol{h}}_{eve,k}^H$. \par
\vspace{-10pt}
\begin{subequations}
\begin{align}
\mathcal{P}_4:
\min_{\boldsymbol{W},\boldsymbol{V}}&~ -\log_2(1+\alpha/\sigma^2_{ir})\\
\mbox{s.t.}~~
&\mathrm{Tr}(\boldsymbol{W})+\mathrm{Tr}(\boldsymbol{V})\leq P_{s},&\label{P4_b}\\
&\mathrm{Tr}(\boldsymbol{G}_{i}\boldsymbol{W})+\mathrm{Tr}(\boldsymbol{G}_{i}\boldsymbol{V}) \geq \beta_{i},&\label{P4_c}\\
&\mathrm{Tr}(\boldsymbol{G}_{i}\boldsymbol{W})-\gamma\mathrm{Tr}(\boldsymbol{G}_{i}\boldsymbol{V})\leq \gamma \sigma_{e,i}^{2},\label{P4_d}\\
&\mathrm{Tr}({\boldsymbol{H}}_{eve,k}\boldsymbol{W})-\gamma \mathrm{Tr}({\boldsymbol{H}}_{eve,k}\boldsymbol{V})\leq \gamma \sigma_{eve,k}^{2},\label{P4_e}\\
&\mathrm{Tr}(\boldsymbol{WH})\geq \alpha,~\boldsymbol{W} \succeq \boldsymbol{0}, \boldsymbol{V} \succeq \boldsymbol{0},\label{P4_f}\\
&\mathrm{Rank}(\boldsymbol{W})\le 1.\label{P4_g}
\end{align}
\end{subequations}\par
After the conversion from $\mathcal{P}_3$ to $\mathcal{P}_4$, the only non-convexity in $\mathcal{P}_4$ is the rank constraint in (\ref{P4_g}). We utilize the SDR method to solve this problem. Specifically, $\mathcal{P}_4$ can be converted to a standard convex optimization problem after removing constraint (\ref{P4_g}), and then it can be solved by CVX toolbox\cite{boyd2004convex}. Next, we study the tightness of the rank constraint relaxation in $\mathcal{P}_4$.
\begin{thm}
For $P_{s}>0$, if $\mathcal{P}_4$ is feasible, then $\mathrm{Rank}(\boldsymbol{W})\le 1$ is always satisfied.
\end{thm}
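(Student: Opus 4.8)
The plan is to drop the rank constraint (\ref{P4_g}) — the resulting problem is a convex SDP — and then exploit its Karush--Kuhn--Tucker (KKT) conditions to show that \emph{every} optimal $\boldsymbol{W}$ is automatically of rank at most one. First I would verify Slater's condition for the relaxed SDP (which holds whenever $\mathcal{P}_4$ is feasible and $P_s>0$, by strictly perturbing an interior feasible point), so that strong duality holds and the KKT system is necessary at any optimum $(\boldsymbol{W}^\star,\boldsymbol{V}^\star,\alpha^\star)$. Introduce multipliers $\lambda^\star\ge 0$ for (\ref{P4_b}), $\zeta_i^\star\ge 0$ for (\ref{P4_c}), $\delta_i^\star\ge 0$ for (\ref{P4_d}), $\nu_k^\star\ge 0$ for (\ref{P4_e}), $\theta^\star\ge 0$ for the inequality $\mathrm{Tr}(\boldsymbol{WH})\ge\alpha$ in (\ref{P4_f}), and PSD matrices $\boldsymbol{Z}^\star,\boldsymbol{Y}^\star\succeq\boldsymbol{0}$ associated with $\boldsymbol{W}\succeq\boldsymbol{0}$ and $\boldsymbol{V}\succeq\boldsymbol{0}$.

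Stationarity in $\alpha$ gives $\theta^\star=\big(\ln 2\,(\sigma_{ir}^2+\alpha^\star)\big)^{-1}>0$, while stationarity in $\boldsymbol{W}$ and in $\boldsymbol{V}$ give, respectively,
\begin{align}
\boldsymbol{Z}^\star &=\boldsymbol{D}^\star-\theta^\star\boldsymbol{H},\qquad \boldsymbol{D}^\star:=\lambda^\star\boldsymbol{I}_{N_t}+\sum_{i}(\delta_i^\star-\zeta_i^\star)\boldsymbol{G}_i+\sum_k\nu_k^\star\boldsymbol{H}_{eve,k},\label{eq:plan-W}\\
\boldsymbol{Y}^\star &=\lambda^\star\boldsymbol{I}_{N_t}-\sum_{i}(\zeta_i^\star+\gamma\delta_i^\star)\boldsymbol{G}_i-\gamma\sum_k\nu_k^\star\boldsymbol{H}_{eve,k}.\label{eq:plan-V}
\end{align}
Eliminating $\lambda^\star\boldsymbol{I}_{N_t}$ between (\ref{eq:plan-W}) and (\ref{eq:plan-V}) yields the identity
\begin{align}
\boldsymbol{D}^\star=\boldsymbol{Y}^\star+(1+\gamma)\Big(\sum_i\delta_i^\star\boldsymbol{G}_i+\sum_k\nu_k^\star\boldsymbol{H}_{eve,k}\Big)\succeq\boldsymbol{0},\label{eq:plan-D}
\end{align}
since every term on the right is PSD and $1+\gamma>0$. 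I would then upgrade this to $\boldsymbol{D}^\star\succ\boldsymbol{0}$ by contradiction: if $\boldsymbol{D}^\star\boldsymbol{x}=\boldsymbol{0}$ for some $\boldsymbol{x}\ne\boldsymbol{0}$, pre- and post-multiplying (\ref{eq:plan-W}) by $\boldsymbol{x}^H$ and $\boldsymbol{x}$ gives $\boldsymbol{x}^H\boldsymbol{Z}^\star\boldsymbol{x}=-\theta^\star|\boldsymbol{h}^H\boldsymbol{x}|^2\le 0$, so (using $\boldsymbol{Z}^\star\succeq\boldsymbol{0}$ and $\theta^\star>0$) $\boldsymbol{h}^H\boldsymbol{x}=0$ and $\boldsymbol{Z}^\star\boldsymbol{x}=\boldsymbol{0}$; then (\ref{eq:plan-D}) further forces $\boldsymbol{Y}^\star\boldsymbol{x}=\boldsymbol{0}$, $\boldsymbol{g}_i^H\boldsymbol{x}=0$ for every $i$ with $\delta_i^\star>0$, and $\boldsymbol{h}_{eve,k}^H\boldsymbol{x}=0$ for every $k$ with $\nu_k^\star>0$, and substituting back into (\ref{eq:plan-V}) collapses it to $\lambda^\star\boldsymbol{x}=\sum_{i:\delta_i^\star=0}\zeta_i^\star\boldsymbol{G}_i\boldsymbol{x}$. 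From here I would argue a contradiction with optimality of $(\boldsymbol{W}^\star,\boldsymbol{V}^\star)$: such an $\boldsymbol{x}$ is orthogonal to the IR channel and lies in the kernel of $\boldsymbol{Y}^\star$, so a suitable perturbation of $(\boldsymbol{W}^\star,\boldsymbol{V}^\star)$ along $\boldsymbol{x}\boldsymbol{x}^H$ — trading the power it occupies for additional power in the $\boldsymbol{h}$-direction while keeping the EH constraints (\ref{P4_c}) and the leakage constraints (\ref{P4_d})--(\ref{P4_e}) satisfied, which is feasible precisely because $P_s>0$ — yields a feasible point with strictly larger $\mathrm{Tr}(\boldsymbol{WH})$, hence larger $R_{ir}$.

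Once $\boldsymbol{D}^\star\succ\boldsymbol{0}$ is established, the conclusion is immediate: since $\mathrm{Rank}(\boldsymbol{H})=\mathrm{Rank}(\boldsymbol{h}\boldsymbol{h}^H)=1$, (\ref{eq:plan-W}) gives $\mathrm{Rank}(\boldsymbol{Z}^\star)\ge\mathrm{Rank}(\boldsymbol{D}^\star)-1=N_t-1$, and the complementary slackness $\boldsymbol{Z}^\star\boldsymbol{W}^\star=\boldsymbol{0}$ places the range of $\boldsymbol{W}^\star$ inside the kernel of $\boldsymbol{Z}^\star$, whose dimension is $N_t-\mathrm{Rank}(\boldsymbol{Z}^\star)\le 1$; therefore $\mathrm{Rank}(\boldsymbol{W}^\star)\le 1$ for every optimal $\boldsymbol{W}^\star$. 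The step I expect to be the main obstacle is exactly the strictness $\boldsymbol{D}^\star\succ\boldsymbol{0}$: the identity (\ref{eq:plan-D}) hands over $\boldsymbol{D}^\star\succeq\boldsymbol{0}$ for free, but ruling out a nontrivial kernel requires a careful perturbation/feasibility argument that genuinely uses $P_s>0$ together with the activeness of the power constraint and of $\mathrm{Tr}(\boldsymbol{HW})\ge\alpha$; the remaining KKT bookkeeping and the final rank count are routine.
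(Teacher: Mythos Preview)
Your approach is essentially the paper's: drop the rank constraint, verify Slater, derive the KKT stationarity condition for $\boldsymbol{W}$, argue that the dual PSD matrix attached to $\boldsymbol{W}\succeq\boldsymbol{0}$ has rank at least $N_t-1$, and conclude via complementary slackness. The paper's own proof is considerably terser than yours---it writes only the stationarity in $\boldsymbol{W}$ and then outsources the rank bound $\mathrm{Rank}(\boldsymbol{Q}^\star)\ge N_t-1$ entirely to \cite[Appendix, Theorem~11.1]{su2018physical}---whereas you additionally exploit the stationarity in $\boldsymbol{V}$ to obtain the clean identity $\boldsymbol{D}^\star=\boldsymbol{Y}^\star+(1+\gamma)\big(\sum_i\delta_i^\star\boldsymbol{G}_i+\sum_k\nu_k^\star\boldsymbol{H}_{eve,k}\big)\succeq\boldsymbol{0}$ and correctly isolate the upgrade to strict definiteness as the substantive step.
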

\begin{proof}
We adopt a similar approach in \cite{su2018physical} to prove \textbf{Theorem~1}. By relaxing the rank constraint in (\ref{P4_g}), the relaxed problem is jointly convex with respect to the optimization variables, and it satisfies Slater's constraint qualification. Therefore, to reveal the structure of $\boldsymbol{W}$, the Lagrangian function is given by
\begin{equation}
\begin{aligned}
L&=-\xi\mathrm{Tr}(\boldsymbol{WH}) -\mathrm{Tr}\left( \boldsymbol{QW} \right)
+\chi \mathrm{Tr}\left( \boldsymbol{W} \right)
-\sum_{i=1}^{M} \delta _i \mathrm{Tr}\left( \boldsymbol{G}_i\boldsymbol{W} \right)\\
&+\sum_{i=1}^{M}\varphi _i \mathrm{Tr}\left( \boldsymbol{G}_i\boldsymbol{W} \right)
 +\sum_{k=1}^{K} \rho_k \mathrm{Tr}\left({\boldsymbol{H}}_{eve,k} \boldsymbol{W} \right)+\Delta,
\end{aligned}
\end{equation}
where $\boldsymbol{Q} \succeq \boldsymbol{0}$, $\chi$ $\geq$ 0, $\delta_{i}$ $\geq$ 0, $\varphi _{i}$ $\geq$ 0, $\rho_{k}$ $\geq$ 0, $\xi$ $\geq$ 0 are the dual variables for constraints in $\mathcal{P}_4$. $\Delta$ consists of all terms that are not correlated with $\boldsymbol{W}$. Then, we reveal the structure of $\boldsymbol{W}$ by checking the Karush-Kuhn-Tucker (KKT) conditions of problem $\mathcal{P}_4$, which are expressed as
\begin{equation}
\begin{aligned}
&\mathrm{K1}:\chi ^*,\delta _{i}^{*},\varphi _{i}^{*},\rho_{k}^{*}, \xi^*\ge 0, \boldsymbol{Q}^*\succeq \boldsymbol{0},\\
&\mathrm{K2}:\boldsymbol{Q}^*\boldsymbol{W}^*=\boldsymbol{0},~\mathrm{K3}: \nabla _{\boldsymbol{W}^*}L=\boldsymbol{0}.
\end{aligned}
\end{equation}\par
KKT condition K3 can be rewritten as
\begin{equation}
\begin{aligned}
\nabla _{\boldsymbol{W}^*}L&=\chi ^*\boldsymbol{I}-\sum_{i=1}^{M} \left( \delta _{i}^{*}-\varphi _{i}^{*} \right) \boldsymbol{G}_i +\sum_{k=1}^{K}\rho_{k}^{*}{\boldsymbol{H}}_{eve,k}\\
&\quad -\xi\boldsymbol{H}-\boldsymbol{Q}^*=\boldsymbol{0},
\end{aligned}
\end{equation}
resulting in
\begin{equation}
\boldsymbol{Q}^*=\chi ^*\boldsymbol{I}-\sum_{i=1}^{M}\left( \delta _{i}^{*}-\varphi_{i}^{*} \right) \boldsymbol{G}_i+\sum_{k=1}^{K}\rho_{k}^{*}{\boldsymbol{H}}_{eve,k}-\xi\boldsymbol{H}.
\end{equation}
By exploiting \cite[Appendix-proof of Theorem 11.1]{su2018physical}, it can be proved that Rank($\boldsymbol{Q}$)$\geq$ $N_{t}-1$. According to K2, inequality Rank($\boldsymbol{W}$)$\leq 1$ holds, which completes the proof.
\end{proof}
Therefore, the optimal beamforming vector $\boldsymbol{w}^*$ can be obtained by performing eigenvalue decomposition of $\boldsymbol{W}^*$, i.e., $\boldsymbol{W}^*=\boldsymbol{w}^*(\boldsymbol{w}^*)^H$.

\subsection{Reflective Beamforming Design at the IRS}\label{subC}
Next, we optimize the reflective beamforming at the IRS for given $\boldsymbol{w}$ and $\boldsymbol{V}$.
Let $\boldsymbol{a}_1=\mathrm{diag}(\boldsymbol{h}_r^H)\boldsymbol{Q}$, $\boldsymbol{a}_2=\boldsymbol{h}_d$, $\boldsymbol{b}_{r,i}=\mathrm{diag}(\boldsymbol{g}^{H}_{r,i})\boldsymbol{Q}$, $\boldsymbol{b}_{d,i}=\boldsymbol{g}_{d,i}$, $\boldsymbol{d}_{r,k}=\mathrm{diag}(\bm{h}^H_{reve,k})\bm{Q}$, $\boldsymbol{d}_{d,k}=\boldsymbol{h}_{deve,k}$, $\bm{u}=[e^{j \theta_1}, \ldots, e^{j \theta_{N_r}}]^H$, and $\bar{\bm{u}}=[\bm{u}; 1]$.
We can get $|\bm{u}|=1$ and
\begin{equation}
\begin{aligned}
    &\bm{h}^H=\bm{u}^H \bm{a}_1+ \bm{a}^H_2,~\bm{g}^H_{i} =\bm{u}^H \bm{b}_{r,i}+\boldsymbol{b}^H_{d,i},\\
    &\bm{h}^H_{eve,k}=\bm{u}^H \bm{d}_{r,k}+\bm{d}^H_{d,k}.
\end{aligned}
\end{equation}\par
Then, the objective and constraints (\ref{P3_c})-(\ref{P3_e}) can be rewritten as
\begin{subequations}
\begin{align}
&\left|\boldsymbol{h}^{H} \boldsymbol{w}\right|^{2}= \bar{\bm{u}}^H \bm{A}_1 \bar{\bm{u}}+\bm{a}^H_2 \bm{W} \bm{a}_2,\\
&\bar{\bm{u}}^H (\bm{A}_{4,i} + \bm{A}_{5,i} )\bar{\bm{u}}+\bm{b}_{d,i}^H (\bm{W}+\bm{V}) \bm{b}_{d,i} \geq \beta_i,\\
&\bar{\bm{u}}^H (\bm{A}_{4,i}-\gamma \bm{A}_{5,i}) \bar{\bm{u}}+\bm{b}_{d,i}^H \bm Z \bm{b}_{d,i} \leq \gamma \sigma^2_{e,i},\\
&\bar{\bm{u}}^H (\bm{A}_{2,k}-\gamma \bm{A}_{3,k})
\bar{\bm{u}}+\bm{d}^H_{d,k} \bm Z \bm{d}_{d,k} \leq \gamma \sigma^2_{eve,k},
\end{align}
\end{subequations}
where
\begin{equation}
\begin{aligned}
    \bm{A}_1&=\begin{bmatrix}
    \bm{a}_1 \bm W \bm{a}^H_1 & \bm{a}_1 \bm{W} \bm{a}_2\\
    \bm{a}^H_2 \bm{W} \bm{a}^H_1  & 0
    \end{bmatrix},
    \bm{A}_{2,k}=\begin{bmatrix}
    \bm d_{r,k} \bm W \bm d^H_{r,k} & \bm {d}_{r,k} \bm{W} \bm{d}_{d,k}\\
    \bm{d}^H_{d,k} \bm{W} \bm{d}^H_{r,k}  & 0
    \end{bmatrix},\\
    \bm{A}_{3,k}&=\begin{bmatrix}
    \bm d_{r,k} \bm V \bm d^H_{r,k} & \bm {d}_{r,k} \bm{V} \bm {d}_{d,k}\\
    \bm{d}^H_{d,k} \bm{V} \bm{d}^H_{r,k}  & 0
    \end{bmatrix},
    \bm{A}_{4,i}=\begin{bmatrix}
    \bm{b}_{d,i} \bm W \bm b^H_{r,i} & \bm{b}_{r,i} \bm W \bm b_{d,i}\\
    \bm b_{d,i}^H  \bm W \bm b_{r,i}^H  & 0
    \end{bmatrix},\\
    \bm{A}_{5,i}&=\begin{bmatrix}
    \bm b_{r,i} \bm V \bm b^H_{r,i} & \bm b_{r,i} \bm V \bm b_{d,i}\\
    \bm b_{d,i}^H \bm V \bm b_{r,i}^H  & 0
    \end{bmatrix},~ \bm Z=\bm W- \gamma \bm{V}.
\end{aligned}
\end{equation}

We definite $\bm{U}=\bar{\bm{u}} \bar{\bm{u}}^H$. $\bm{U} \in \mathbb{C}^{(N_r+1)\times(N_r+1)}$ is semi-definite and $ \mathrm{rank}(\bm{U})=1$. Set $\bm B_{i}=\bm{b}_{d,i} \bm{b}_{d,i}^H$, $\bm C= \bm{a}_2 \bm{a}^H_2$ and $\bm D_{k}=\bm{d}_{d,k} \bm{d}_{d,k}^H$. Then, problem $\mathcal{P}_3$ can be converted to the following problem.
\begin{subequations}
\begin{align}
\mathcal{P}_5:
&\min_{{\bm U}}~ -\log_2(1+(\mathrm{Tr}(\bm A_1 \bm U)+\mathrm{Tr}(\bm C \bm W))/\sigma_{ir}^2) \label{P5_a}\\
\mbox{s.t.}~
&\mathrm{Tr} \left((\bm{A}_{4,i} + \bm{A}_{5,i} ) \bm U \right) + \mathrm{Tr} (\bm{B}_{i} (\bm W+ \bm V))\geq \beta_{i},&\label{P5_b}\\
&\mathrm{Tr} \left((\bm{A}_{4,i}-\gamma \bm{A}_{5,i} ) \bm U \right) + \mathrm{Tr} (\bm{B}_{i} \bm Z) \leq \gamma \sigma^2_{e,i}, &\label{P5_c}\\
&\mathrm{Tr} \left((\bm{A}_{2,k}-\gamma \bm{A}_{3,k} ) \bm U \right) + \mathrm{Tr} (\bm{D}_{k} \bm Z) \leq \gamma \sigma^2_{eve,k}, &\label{P5_d}\\
& \bm{U}_{n,n}=1,~\forall n \in \{1,2,\cdots, N_r+1\}, ~\bm U \succeq \boldsymbol{0},&\label{P5_e}\\
& \mathrm{Rank}(\bm U)=1.&\label{P5_f}
\end{align}
\end{subequations}

Because of the rank-one constraint, $\mathcal{P}_5$ is non-convex. To tackle this problem, we apply a penalty-based algorithm \cite{qi2020robust,mao2019power,nocedal2006numerical}. Since $\bm{U}$ is a PSD matrix, $\mathrm{Tr}(\bm U) \geq \lambda_{max}(\bm U)$ holds. Moreover, $\mathrm{Rank}(\bm U)=1$ exists when its trace is equal to its maximum eigenvalue, i.e., $\mathrm{Tr}(\bm U) = \lambda_{max}(\bm U)$. To overcome the non-convexity, we replace rank-one constraint with a penalty function $\eta(\mathrm{Tr}(\bm U) - \lambda_{max}(\bm U))$. The penalty function is moved into the objective function, which results in the following optimization problem.
\begin{equation}
\begin{aligned}
\mathcal{P}_6:
\min_{{\bm U}}&~-\log_2(1+(\mathrm{Tr}(\bm A_1 \bm U)+\mathrm{Tr}(\bm C \bm W))/\sigma_{ir}^2)\\
&+\eta(\mathrm{Tr}(\bm U) - \lambda_{max}(\bm U)) \label{P6_a}\\
\mbox{s.t.}~
&(\text{\ref{P5_b}})-(\text{\ref{P5_e}}),
\end{aligned}
\end{equation}
where $\eta >0$ is a penalty factor. It is obvious that the rank-one solution of $\bm U$ can be obtained when $\eta$ is sufficiently large. \par
Due to the convexity of $\lambda_{max}(\bm U)$, the problem $\mathcal{P}_6$ is still non-convex. To solve this problem, we employ an effective successive convex approximation (SCA) method to transform it into the following iterative optimization problem.
\begin{subequations}
\begin{align}
\mathcal{P}_7:
\min_{{\bm U}}&~-\log_2(1+(\mathrm{Tr}(\bm A_1 \bm U^{(t+1)}+\mathrm{Tr}(\bm C \bm W)))/\sigma_{ir}^2) \nonumber\\
&~+\eta (\mathrm{Tr}(\bm U^{(t+1)})-(\bar{\bm u}^{(t)}_{max})^H \bm U^{(t+1)} \bar{\bm u}^{(t)}_{max}) \label{P7_a}\\
\mbox{s.t.}~~
&(\text{\ref{P5_b}})-(\text{\ref{P5_e}}),
\end{align}
\end{subequations}
where superscript $(t)$ represents the iteration index of the optimization variables.\par
Then, $\mathcal{P}_7$ can be solved by CVX. We can obtain $\bm U= \lambda_{max}(\bm U) \bar{\bm u}_{max} \bar{\bm u}^H_{max}$ when $\mathrm{Tr}(\bm U) \approx \lambda_{max}(\bm U)$. $\bar{\bm u}_{max}$ denotes the unit eigenvector related to the maximum eigenvalue $\lambda_{max}(\bm U)$. Then, we can obtain the optimal reflecting vector $\bar{\bm u}=\sqrt{\lambda_{max}(\bm U)} \bar{\bm u}_{max}$. Finally, the optimal solution $\bm{u}^{*}$ can be expressed as $\bm{u}^{*}=[\bar{\bm{u}}/\bar{\bm{u}}_{N_r+1}]_{(1:N_r)}$.\par
The procedure of the penalty-based algorithm is summarized in \textbf{Algorithm \ref{algo-1}}. The convergence of the proposed algorithm can be proved in \cite[Theorem 1]{mao2019power}. Then, the proposed overall AO-based algorithm is summarized in \textbf{Algorithm \ref{algo-2}}.

\begin{algorithm}[t]
\caption{Penalty-based Algorithm for Obtaining Reflective Beamforming.} 
\label{algo-1}
\hspace*{0.02in}{\bf Initialization:} Set feasible values $\bm{U}^{(0)}$, penalty factor $\eta$, and convergence tolerance $\epsilon$.\\
\hspace*{0.02in}{\bf For} $t=0,1,2,\cdots$ \textbf{do}\\
Solve problem $\mathcal{P}_7$ by using CVX to obtain $\bm U^{(t+1)}$.\\
\hspace*{0.02in}{\bf If} $\mathrm{Tr}(\bm U^{(t+1)})-\lambda_{max}(\bm U^{(t+1)}) \leq \epsilon$\\
~~~~~\textbf{break}\\
\hspace*{0.02in}{\bf End}\\
\hspace*{0.02in}{\bf End}\\
\hspace*{0.02in}{\bf Output:} 
Optimal reflective beamforming $\bm u^{*}$.\\
\end{algorithm}
\begin{algorithm}[t]
\caption{AO-based Algorithm for Solving $\mathcal{P}_3$.}
\label{algo-2}
\hspace*{0.02in}{\bf Initialization:} Set $\bm{\Theta}^{(0)}$ and iteration index $r=0$.\\
\hspace*{0.02in}{\bf Repeat:}\\
For given $\bm{\Theta}^{(r)}$, solve problem $\mathcal{P}_4$ by using CVX to obtain the solution $\bm{w}^{(r+1)}$ and $\bm{V}^{(r+1)}$.\\
For given $\bm{w}^{(r+1)}$ and $\bm{V}^{(r+1)}$, solve problem $\mathcal{P}_7$ by \textbf{Algorithm} \textbf{\ref{algo-1}} to obtain $\bm{\Theta}^{(r+1)}=\mathrm{diag}(\bm u^{(t+1)})$.\\
Update $r=r+1$.\\
\hspace*{0.02in}{\bf Until:}
Convergence\\
\end{algorithm}

\section{Simulation Results}\label{IV}
In this section, we provide some numerical results to validate the effectiveness of the proposed scheme in the secure IRS-aided SWIPT system. We consider a two-dimensional coordinate system. The BS is located at $(0, 0)$ meters, and the IRS is located at $(5,3)$ meters. Two ERs and two Eves are randomly deployed within a circular area centered at $(5,0)$ meters with radius $1$ m and $(55,0)$ meters with radius $2$ m, respectively. One IR is deployed at $(50,0)$ meters. The path loss is composed of large-scale path loss (e.g. distance-dependent path loss \cite{wu2019beamforming}) and small-scale path loss (e.g., the considered channels follows a Rayleigh distribution). The path loss exponent $\alpha$ for BS-ER link, BS-IR/Eves link, IRS-IR/ERs/Eves link, and BS-IRS link are set as $3$, $3.6$, $2.5$, $2$, respectively. Unless otherwise stated, other parameters are given as: $M_i=24$ mW, $a_i=150$, $b_i=0.014$ \cite{qi2020robust}, $N_{t}=16$, $\sigma^2_{ir}=\sigma^2_{e,i}=\sigma^2_{eve,k}=-60$ dBm, $P_s=30$ dBm, and $\mu_i=10$ $\mu$W. Next, we employ two baselines for comparison. For baseline 1, the system performance without IRS is evaluated. For baseline 2, we employ an IRS with a random phase \cite{di2019smart}.\par
\begin{figure}[!t]
\centering
\includegraphics[scale=0.6]{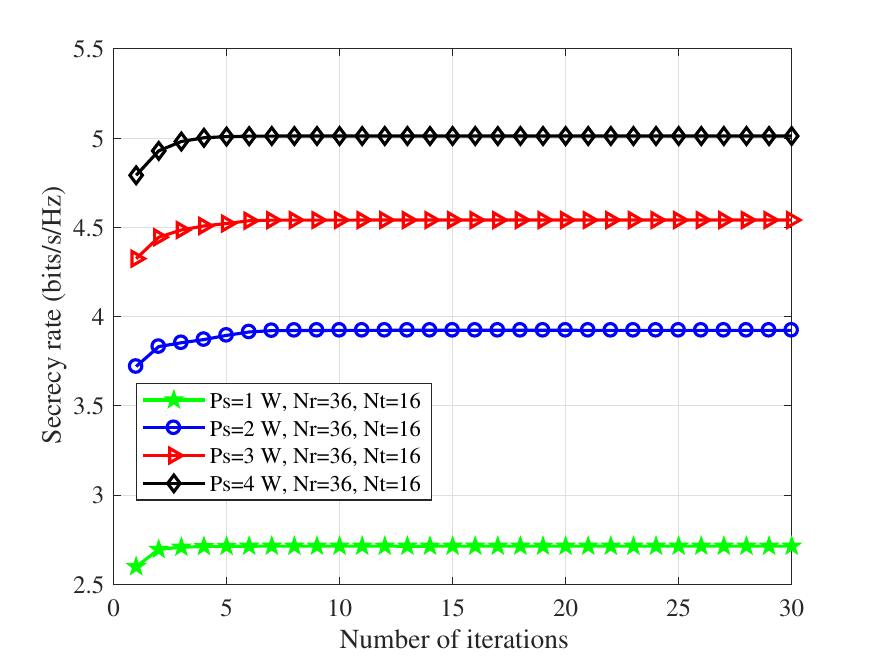}
\caption{Secrecy rate versus number of iterations.\vspace{-12pt}}
\label{fig:2}
\end{figure}
The convergence of the secrecy rate by using the proposed scheme is investigated under different transmit powers shown in Fig. \ref{fig:2}. Each secrecy rate obtained by the proposed scheme converges after around $7$ iterations on average.

Figure \ref{fig:3} shows the secrecy rate versus the total transmit power at the BS, $P_s$, under different numbers of reflecting elements, while setting $\tau=1$ bit/s/Hz. It is observed that the secrecy rates for the proposed scheme and baseline schemes increase monotonically with the transmit power. The proposed scheme outperforms the baseline schemes. This is because the proposed scheme can provide a more favorable wireless propagation environment for the IR and weaken the interests of eavesdroppers. In addition, increasing the number of reflecting elements improves the secrecy rate. The reason is that degrees of freedom (DoFs) increase with the number of reflecting elements, so as to improve the secrecy rate of the system. Specifically, when $N_r=64$, our proposed scheme can improve the secrecy rate performance by 80.4\% and 27.3\% on average compared with baseline 1 and baseline 2, respectively, and when $N_r=36$, it can improve 44.5\% and 24.9\%, respectively.

\begin{figure}[!t]
\centering
\includegraphics[scale=0.55]{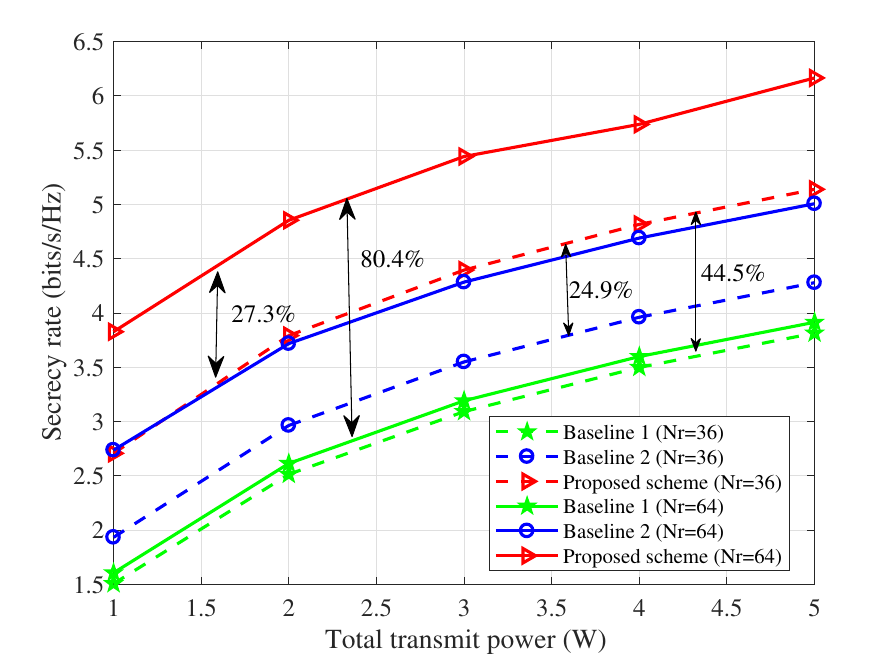}
\caption{Secrecy rate versus total transmit power.\vspace{-12pt}}
\label{fig:3}
\end{figure}

Figure \ref{fig:4} depicts the secrecy rate versus the harvested power requirement when $P_s=1$ W. Evidently, the secrecy rate decreases as the minimum harvested power requirement increases. The secrecy rate obtained by using the proposed scheme is much higher than that obtained by using other baselines shown in Fig. \ref{fig:4}. Moreover, we also study two scenarios with the different numbers of the transmit antenna $N_{t}=\{8, 16\}$. As can be observed, the secrecy rate of the system increases with $N_t$.

\begin{figure}[!t]
\centering
\includegraphics[scale=0.58]{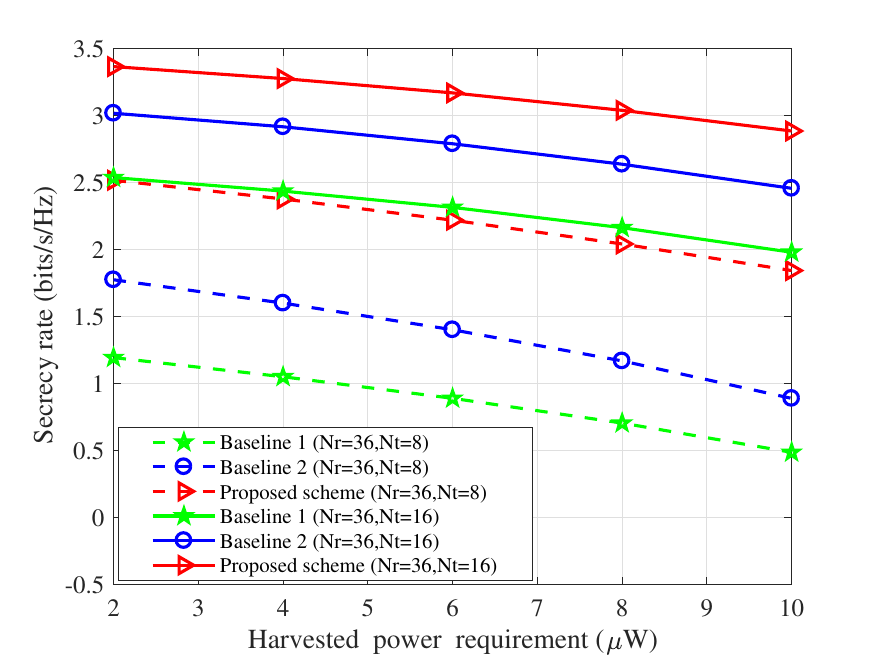}
\caption{Secrecy rate versus harvested power requirement. \vspace{-12pt}}
\label{fig:4}
\end{figure}

\section{Conclusion}\label{V}
In this paper, we have presented an investigation of secure beamforming design in the secure IRS-aided SWIPT systems to maximize the secrecy rate while satisfying the BS transmit power and EH constraints. An AO algorithm was proposed to tackle the coupling of optimization variables. Next, the transmit beamforming and AN covariance matrix were optimized by adopting the SDR method. The penalty-based algorithm was employed to optimize the reflective beamforming. Simulation results show that our proposed scheme performs better than the baseline schemes in terms of secrecy rate.

\ifCLASSOPTIONcaptionsoff
  \newpage
\fi

\bibliographystyle{ieeetr}

\end{document}